\documentclass[conference]{IEEEtran}
\usepackage{cite}
\usepackage{amsmath,amssymb,amsfonts}
\usepackage{algorithmic}
\usepackage{graphicx, ifthen}
\usepackage{subfig}
\usepackage{tikz, pgfplots, pgfmath}
\usetikzlibrary{positioning, fit, calc, decorations.markings}
\usepackage{textcomp}
\usepackage{enumerate}
\usepackage{xcolor}
\usepackage{braket}
\usepackage{physics}
\usepackage{comment}
\usepackage{mathtools}
\usepackage{hyperref}
\usetikzlibrary{3d}
\usepackage{tikz-3dplot}

\usepackage[stretch=15,shrink=20]{microtype}

\def\BibTeX{{\rm B\kern-.05em{\sc i\kern-.025em b}\kern-.08em
    T\kern-.1667em\lower.7ex\hbox{E}\kern-.125emX}}

\usepackage[utf8]{inputenc}
\usepackage[english]{babel}
\usepackage[T1]{fontenc}

\usepackage{lipsum}

\usepackage{makecell}
\setcellgapes{4pt}

\usepackage{algorithm}

\usepackage{amsthm}
\theoremstyle{plain}

\newtheorem{proposition}{Proposition}

\theoremstyle{definition}
\newtheorem{definition}{Definition}

\IEEEoverridecommandlockouts

\begin{document}

\title{Task Concurrency and Compatibility in Measurement-Based Quantum Networks}

\author{\IEEEauthorblockN{Jakob Kaltoft Søndergaard}
\IEEEauthorblockA{\textit{Department of Electronic Systems} \\
\textit{Aalborg University}\\
Aalborg, Denmark \\
jakobks@es.aau.dk}
\and
\IEEEauthorblockN{René Bødker Christensen}
\IEEEauthorblockA{\textit{Department of Mathematical Sciences} \\
\textit{Aalborg University}\\
Aalborg, Denmark \\
rene@math.aau.dk}
\and
\IEEEauthorblockN{Petar Popovski}
\IEEEauthorblockA{\textit{Department of Electronic Systems} \\
\textit{Aalborg University}\\
Aalborg, Denmark \\
petarp@es.aau.dk}
\thanks{This work was supported, in part, by the Danish National Research Foundation (DNRF), through the Center CLASSIQUE, grant nr. 187.}
}

\maketitle

\begin{abstract}
    Measurement-Based Quantum Networks (MBQNs) rely on multipartite pre-shared entanglement resources to satisfy entanglement requests. Traditional designs optimize these resources for individual tasks, neglecting that multiple tasks may arrive concurrently and compete for the same entanglement. We introduce \emph{compatibility} as a design-level metric, capturing whether concurrent tasks can be satisfied by the same entanglement resources. We define a worst-case notion of compatibility where nodes are prevented from coordinating after task arrival and illustrate why tasks may be incompatible. Furthermore, we explore compatibility extensions that account for stochastic arrivals and the capability to supplement the pre-shared entanglement with additional entanglement on-demand, and show that incompatibility differs structurally dependent on the set of concurrent tasks. We argue that compatibility should be used for resource state design, building the foundation for determining which task pairs the network should support with pre-shared entanglement and which require execution-time coordination. Numerical simulations demonstrate this potential, with $(G,1)$-compatibility achieving a 40\%-55\% gain in simultaneously supported tasks relative to the single-task baseline. By incorporating compatibility as a fundamental design objective, quantum networks can move beyond single-task optimization towards scalable, robust architectures that effectively balance proactive entanglement distribution and supplemental reactive coordination.
\end{abstract}

\begin{IEEEkeywords}
    Quantum internet, quantum networks, quantum communication, quantum entanglement
\end{IEEEkeywords}

\section{Introduction}
Quantum networks enable fundamentally new communication and distributed computing capabilities by allowing distant nodes to share and manipulate quantum states \cite{wehner2018quantum}. Central to these networks is shared entanglement, a quantum resource that is costly to generate, time-sensitive to maintain, and constrained by how it is distributed and consumed within the network. Consequently, the structure and entanglement topology of a quantum network play a decisive role in determining which tasks can be supported and whether they can be executed concurrently.

Measurement-based quantum networking (MBQN) is a promising architecture, in which it is assumed that multipartite entanglement resources are provisioned ahead of demand, forming a shared entanglement layer that must later be transformed to satisfy tasks \cite{pirker2019quantum}. In this setting, the network must decide in advance which entanglement to generate, distribute, and maintain, without knowing the exact nature of the tasks that will arrive and when they will occur. While entanglement can in principle be generated on demand once a task is known, doing so typically incurs substantial latency and coordination overhead, making proactive entanglement distribution an attractive, but non-trivial, design strategy. The resulting pre-distributed entanglement topology therefore plays a crucial role in how well the network can support not only individual tasks, but also multiple tasks that may arrive concurrently.

Quantum networking is still in its infancy, and hence quantum network architectures are still commonly designed and evaluated through single-task metrics despite the shared nature of entanglement resources. Existing performance metrics typically include maximizing the fidelity or minimizing either latency or storage requirements of an end-to-end entangled connection under the assumption that other demands are absent or can be treated independently \cite{mor2025imperfect, miguel2023optimized, sondergaard2025satellite}. In contrast, multi-task strategies are well established in classical networking architectures, where concurrent traffic can often be managed through buffering, repetition, or mature access-control mechanism. In quantum networks, however, the combination of pre-distributed, consumable entanglement and limited time for coordination at task arrival makes concurrency fundamentally more delicate. Consequently, the question of how multipartite entanglement should be designed to remain effective under simultaneous and potentially competing task demands remains largely open.

\begin{figure*}[t] 
    \centering
    \subfloat[\label{fig: invited__intro_GHZ}]{%
        \tikzset{
        plane/.style={fill=gray!40, opacity=.15, draw=gray!90, line width=.5pt},
        node/.style={circle, fill=white, draw=black, line width=0.3pt, inner sep=1.5pt, font=\sffamily},
        edge/.style={black, thick}
}

\tdplotsetmaincoords{80}{120}

\begin{tikzpicture}[tdplot_main_coords, scale=2]
        
    \begin{scope}[canvas is xy plane at z=0]
        \draw[plane] (-0.3,-0.3) rectangle (1.7,1.4);
        \node[node] at (0,0) (1) {1};
        \node[node] at (1.4,0) (2) {2};
        \node[node] at (0.7,1.1) (3) {3};
        \draw[edge] (1) -- (2);
        \draw[edge] (1) -- (3);
        \draw[edge] (2) -- (3);
    \end{scope}
\end{tikzpicture}}
    \subfloat[\label{fig: invited_intro_EPR}]{%
        \tikzset{
    plane/.style={fill=blue!5, opacity=0.3, draw=blue!20, line width=0.5pt},
    node/.style={circle, fill=white, draw=blue!70!black, text=blue!70!black, line width=0.3pt, inner sep=1.5pt, font=\sffamily},
    node_unused/.style={circle, fill=white, draw=gray!20, text=gray!25, line width=0.2pt, inner sep=1.2pt, font=\sffamily},
    edge/.style={blue!70!black, thick},
    pillar/.style={blue!15, densely dotted, line width=0.3pt}
}

\tdplotsetmaincoords{80}{120}
\begin{tikzpicture}[tdplot_main_coords, scale=2]

    \foreach \x/\y in {0/0, 1.4/0, 0.7/1.1} {
        \draw[pillar] (\x,\y,0) -- (\x,\y,1.0);
    }

    \begin{scope}[canvas is xy plane at z=0]
        \draw[plane] (-0.3,-0.3) rectangle (1.7,1.4);
        \node[node_unused] at (0,0) (01) {1};
        \node[node] at (1.4,0) (02) {2};
        \node[node] at (0.7,1.1) (03) {3};
        \draw[edge] (02) -- (03);
    \end{scope}

    \begin{scope}[canvas is xy plane at z=0.5]
        \draw[plane] (-0.3,-0.3) rectangle (1.7,1.4);
        \node[node] at (0,0) (11) {1};
        \node[node_unused] at (1.4,0) (12) {2};
        \node[node] at (0.7,1.1) (13) {3};
        \draw[edge] (11) -- (13);
    \end{scope}

    \begin{scope}[canvas is xy plane at z=1.0]
        \draw[plane] (-0.3,-0.3) rectangle (1.7,1.4);
        \node[node] at (0,0) (31) {1};
        \node[node] at (1.4,0) (32) {2};
        \node[node_unused] at (0.7,1.1) (33) {3};
        \draw[edge] (31) -- (32); 
    \end{scope}
\end{tikzpicture}}
    \subfloat[\label{fig: invited_intro_EPRs}]{%
        \tikzset{
    plane/.style={fill=red!5, opacity=0.3, draw=red!20, line width=0.5pt},
    node/.style={circle, fill=white, draw=red!70!black, text=red!70!black, line width=0.3pt, inner sep=1.5pt, font=\sffamily},
    edge/.style={red!70!black, thick},
    pillar/.style={red!15, densely dotted, line width=0.3pt}
}

\tdplotsetmaincoords{80}{120}
\begin{tikzpicture}[tdplot_main_coords, scale=2]

    \foreach \x/\y in {0/0, 1.4/0, 0.7/1.1} {
        \draw[pillar] (\x,\y,0) -- (\x,\y,1.0);
    }

    \begin{scope}[canvas is xy plane at z=0]
        \draw[plane] (-0.3,-0.3) rectangle (1.7,1.4);
        \node[node] at (0,0) (01) {1};
        \node[node] at (1.4,0) (02) {2};
        \node[node] at (0.65,1.05) (03A) {3};
        \draw[edge] (01) -- (03A);
        \node[node] at (0.75,1.05) (03B) {3};
        \draw[edge] (02) -- (03B);
    \end{scope}

    \begin{scope}[canvas is xy plane at z=0.5]
        \draw[plane] (-0.3,-0.3) rectangle (1.7,1.4);
        \node[node] at (0,0) (11) {1};
        \node[node] at (1.35,0) (12A) {2};
        \draw[edge] (11) -- (12A);
        \node[node] at (1.45,0) (12B) {2};
        \node[node] at (0.7,1.1) (13) {3};
        \draw[edge] (13) -- (12B);
    \end{scope}

    \begin{scope}[canvas is xy plane at z=1.0]
        \draw[plane] (-0.3,-0.3) rectangle (1.7,1.4);
        \node[node] at (.05,0) (31A) {1};
        \node[node] at (1.4,0) (32) {2};
        \draw[edge] (32) -- (31A);
        \node[node] at (-.05,0) (31B) {1};
        \node[node] at (0.7,1.1) (33) {3};
        \draw[edge] (33) -- (31B); 
    \end{scope}
\end{tikzpicture}}
    \caption{Minimal example illustrating limitations of single-task metrics under concurrent tasks. a) Tripartite entanglement resource shared by the network prior to task arrival. b) Bipartite entanglement between any node pair in isolation is satisfiable by the resource. c) Concurrent independent bipartite entanglements are not compatible by the resource.}
\end{figure*}
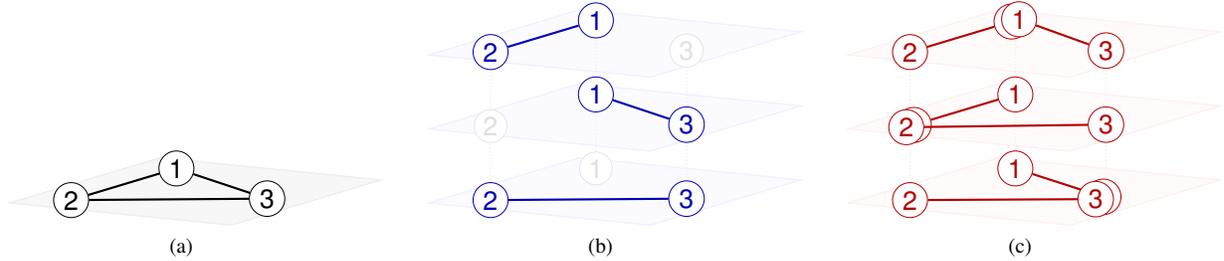

As quantum networks evolve toward large-scale, multi-user networks, it is natural to expect that entanglement requests will arrive in a dynamical, asynchronous, and stochastic manner, rather than being centrally coordinated. Users may request entanglement independently with little advance knowledge of the current resource state or of other simultaneous demands, resulting in random-access-like arrivals. When multiple tasks arrive concurrently, they may compete for the same pre-distributed entanglement. Importantly, not all such concurrency is inherently problematic; some tasks can be satisfied simultaneously whereas others impose conflicting requirements on shared nodes or links \cite{freund2024flexible}. This distinction suggests that beyond the total amount of entanglement available, the \emph{compatibility} of tasks with respect to a given entanglement resource becomes a central measure of network performance under concurrent demand. Under such concurrent arrivals, an entanglement resource that is sufficient for every task in isolation can become jointly infeasible. Two tasks that could each be completed successfully on their own may interfere when executed simultaneously, competing for overlapping portions of the shared entanglement and preventing either from being realized. In this sense, concurrency can lead not merely to performance degradation, but to structural failure; the network may be unable to satisfy any of the active demands despite having been optimized under single-task metrics.

A simple example illustrates this point. Consider three network nodes that share a single tripartite entangled state, such as a GHZ-state (see Fig.~\ref{fig: invited__intro_GHZ}). This state is sufficient to establish an EPR pair between any node pair via local operations, and is therefore effective for satisfying one end-to-end entanglement request at a time (Fig.~\ref{fig: invited_intro_EPR}). However, if two node pairs request entanglement simultaneously, the same resource is fundamentally insufficient; since only three qubits are stored, the network cannot support two independent EPR pairs concurrently (Fig.~\ref{fig: invited_intro_EPRs}). Thus, although the resource is optimal in a single-task sense, it fails immediately under concurrent demand. This minimal scenario highlights that quantum network design must account not only for single-task capability, but also for the compatibility of multiple concurrent tasks.

One way to mitigate such conflicts is to introduce explicit coordination mechanisms that regulate when users are allowed to request tasks, analogous to medium-access control in classical networks \cite{pirker2025resource, illiano2023quantum}. However, such approaches require additional signaling, consume scarce resources, and introduce latency; costs that are particularly pronounced in quantum networks. An alternative is to incorporate compatibility directly into the design of the entanglement topology itself, enabling certain classes of concurrent tasks to be supported without requiring costly coordination at task arrival.

In this paper, we argue that \emph{task compatibility} should be treated as a first-class design objective in the design of entanglement resources in MBQN. We introduce a minimal notion of compatibility that captures whether multiple tasks can be satisfied concurrently by a given pre-distributed entanglement topology, focusing on a worst-case setting with minimal coordination at task arrival. We then extend this framework to \emph{beyond-worst-case compatibility}, allowing for limited execution-time coordination and adaptive entanglement distribution. The purpose is not to present a fully optimized compatibility-aware design framework, but rather to expose the limitations of single-task-centric metrics, reinterpret existing entanglement distribution strategies through the lens of concurrent demand, and outline a research agenda for quantum network designs that remain robust when tasks arrive unpredictably and overlap in time. To illustrate the practical impact of these ideas, we perform numerical simulations, demonstrating substantial and consistent gains in the number of concurrently supported tasks even under minimal coordination.

\section{Quantum Networking Architecture and Resource Model}
In this section, we give a brief introduction to the MBQN framework and present the resource model considered in this paper.

\subsection{Entanglement Resource Graph}
In MBQN, the entanglement resource is represented by a graph state \cite{hein2006entanglement}. Let $G=(V,E)$ be a simple graph where $V$ denotes the set of qubits and $E$ the set of entangling links. The corresponding multipartite graph state $\ket{G}$ is defined as the stabilizer state of the operators
\begin{equation}
    S_v=X_v\prod_{u\in N(v)}Z_u,
    \quad\forall\, v\in V,
\end{equation}
where $X_v,Z_u$ represents the Pauli $X,Z$ operator on qubit $v,u$, and $N(v)$ the neighborhood of node $v$ in $G$.

The resource state $\ket{G}$ is established through an entanglement distribution strategy and is assumed to be available when tasks arrive. Consequently, we do not explicitly focus on the distribution of the resource state. Instead, we focus on the performance of resource states under concurrent tasks. The key architectural question is therefore not only how much entanglement is distributed, but how it is structured across the resource state $\ket{G}$. Figure~\ref{fig: Invited_Network_Setup} illustrates a seven node network sharing a 1D-cluster resource state, which we will use as the setup throughout this paper.

\begin{figure}
    \centering
    \tikzset{network/.style={rectangle, draw, inner sep=6pt},
qubit/.style={circle, black, draw, fill, very thin, inner sep=3pt},
user/.style={rectangle, draw, black, inner sep=7pt}
}

\begin{tikzpicture}[scale=.75, transform shape]

    \node[qubit] at (0,0) (1) {};
    \node[above left=0pt of 1, yshift=-2pt] {$1$};
    \node[user, fit=(1)] (U1) {};

    \foreach \n in {2,...,7}{
        \pgfmathtruncatemacro{\nprior}{\n-1}
        \node[qubit, right=1cm of \nprior] (\n) {};
        \node[above left=0pt of \n, yshift=-2pt] {$\n$};
        \node[user, fit=(\n)] (U\n) {};
        \draw[black] (\nprior) -- (\n);
    }

\end{tikzpicture}
    \caption{Seven node 1D-cluster resource state used throughout the paper.}
    \label{fig: Invited_Network_Setup}
\end{figure}
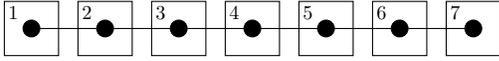

\subsection{Allowed Operations}
Tasks are executed on the shared entanglement by locally transforming the resource state into requested entanglement using a measurement-based approach. Concretely, we assume that nodes are restricted to local operations and classical communication (LOCC). Notably, nodes are allowed to perform Pauli measurements in the $Z,Y$-bases, respectively corresponding to deleting the measured node and its incident edges, and a local complementation of the measured node before deleting it and its incident edges. The graphical action of these measurements can be illustrated by considering their action on the resource state in Fig.~\ref{fig: Invited_Network_Setup} when node 4 is measured: a $Z_4$-measurement deletes node 4 and its edges, thereby separating the graph in two, yielding the separable state $\ket{G_{123}}\otimes\ket{G_{567}}$; a $Y_4$-measurement adds the edge $(3,5)$ before deleting node 4 and its incident edges, resulting in one entangled graph state still in a 1D-cluster.

Importantly, the measurement-based approach is inherently consumptive; once a node is measured to realize a task, the corresponding qubit and its entanglement resources (entangling links) is no longer available for other concurrent tasks.

\subsection{Tasks}
A task corresponds to a request for entanglement. In this work, we consider a network supporting distribution of EPR pairs between any two nodes in the network. In the graph state formalism, an EPR pair between nodes $u,v$ is equivalent to a component consisting of those two nodes, i.e., a connected subgraph with nodes $u,v$ that are not connected to any other nodes.

A task $T=(u,v)$ is said to be feasible by the resource state $\ket{G}$ if the requested entanglement an be obtained from the resource state solely through LOCC. Note that checking feasibility is in general NP-Complete \cite{dahlberg2020transforming}. The simplest approach to satisfy a task is the repeater-path protocol \cite{hahn2019quantum}, which consists of isolating a path $P=(u,\ldots,v)$ by performing $Z$-measurements on the neighbors of the path, followed by $Y$-measurements on the intermediate nodes on the path. Hence, feasibility of $T$ is guaranteed by the existence of a path $P$ with end nodes $u,v$ in $G$. We intentionally restrict attention to the simple repeater-path protocol in order to highlight compatibility as a consequence of architectural limitations of the entanglement resource, rather than of protocol-specific optimization. While more sophisticated protocols may exist \cite{freund2024flexible}, such approaches may improve performance under concurrent demand but do not eliminate the need for a unified compatibility framework.

We assume that tasks arrive stochastically and without prior coordination. Given a set of simultaneously active tasks
\begin{equation}
    \mathcal{T}=\{T_1,\ldots,T_n\},
\end{equation}
each task attempts to consume entanglement from the same underlying resource state. While task $T_i$ may be feasible in isolation, the network may be unable to satisfy all tasks jointly due to the overlapping resource requirements. For simplicity, we primarily consider two concurrent tasks in this work.

The above model captures how individual tasks may be executed from a pre-shared entanglement resource in MBQN. In the next section, we build on this framework by considering the setting with concurrent tasks.

\section{Task Compatibility}
Having introduced the resource and task model, we now turn to the question motivating this work; how should quantum network resources be designed and evaluated when tasks may arrive concurrently and compete for the same pre-shared entanglement. To capture when such tasks can be satisfied simultaneously, we introduce the notion of compatibility.

In this section, we first introduce a worst-case definition of compatibility and illustrate its implications through several concurrent task scenarios. We then move beyond the worst-case setting by discussing more permissive notions of compatibility and how such metrics can be incorporated as a design principle for scalable quantum networks.

\subsection{Worst-Case Compatibility}
To avoid mistaking compatibility with task-arrival coordination, we begin with a minimal worst-case setting in which no additional scheduling or coordination is assumed. While incompatible tasks could in principle be resolved through MAC-like protocols, such approaches induce additional overhead and simply move the burden from the design phase to the execution phase. Instead, we ask which tasks are inherently compatible given only the pre-shared resources and LOCC. Hence, worst-case compatibility isolates concurrency that is enabled by resource state design rather than by execution-time coordination.

\begin{definition}[Worst-Case Task Compatibility]\label{def: invited_worst-case_compatibility}
    Let two tasks $T_1=(u_1,v_1)$ and $T_2=(u_2,v_2)$ be requested from the resource state $G$. We say that $T_1$ and $T_2$ are $G$-compatible if the following two conditions are satisfied:
    \begin{enumerate}[i)]
        \item \textit{Disjointness:}\\
        There exists paths $P_1=(u_1,\ldots,v_1),P_2=(u_2,\ldots,v_2)$ in $G$ such that the paths are vertex-disjoint, i.e.,
        \begin{equation}
            V(P_1)\cap V(P_2)=\emptyset.
        \end{equation}
        \item \textit{Separability:}\\
        The paths $P_1,P_2$ are non-adjacent in the sense that no nodes in $P_1$ neighbor one in $P_2$, i.e.,
        \begin{equation}
            \text{dist}(V(P_1),V(P_2))\geq2.
        \end{equation}
    \end{enumerate}
\end{definition}
When the resource state is inferred, we omit the $G$ and simply say that tasks are compatible. The conditions follows immediately from the required measurements in the repeater-path protocol. Disjointness is a familiar concurrency constraint, where simultaneous tasks must avoid direct contention for the same physical channel resources, e.g., separated in space, time, or frequency. The separability condition, however, is inherently quantum arising from the need to separate entangled subsystems.

\begin{proposition}[Necessity under LOCC]
    In MBQN relying on LOCC via the repeater-path protocol to extract requested entanglement from the resource state $\ket{G}$, disjointness and separability are necessary conditions for two tasks to be simultaneously satisfiable by $\ket{G}$, that is, $G$-compatible.
\end{proposition}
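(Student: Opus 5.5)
We argue by contrapositive. Assume $T_1=(u_1,v_1)$ and $T_2=(u_2,v_2)$ are simultaneously satisfied from $\ket{G}$ by two concurrent executions of the repeater-path protocol. We then show that the paths they use must satisfy both conditions of Definition~\ref{def: invited_worst-case_compatibility}. Throughout we use only the protocol description and the graphical rules for $Z$- and $Y$-measurements given above. Each execution of the protocol for $T_i$ selects a path $P_i=(u_i,\ldots,v_i)$ in $G$. It then $Z$-measures every vertex in $N(V(P_i))\setminus V(P_i)$ and $Y$-measures every interior vertex of $P_i$, leaving the endpoints $u_i,v_i$ unmeasured. So the first step is simply to fix the two paths $P_1,P_2$ used by the successful concurrent execution.

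\textbf{Disjointness.} Suppose some vertex $w\in V(P_1)\cap V(P_2)$; we split into cases. If $w$ is an endpoint for both tasks, say $w=u_1=u_2$, then $w$ is one qubit that must belong to two distinct EPR pairs, each of which is a two-vertex component. This is impossible, since in the final graph state $w$ lies in exactly one component. If $w$ is an endpoint of $T_1$ but an interior vertex of $P_2$, then protocol $T_2$ requires a $Y$-measurement of $w$. This consumes the qubit, contradicting that $w$ survives as half of the EPR pair for $T_1$. If $w$ is interior to both paths, both protocols demand a measurement of the same qubit. Because measurements are consumptive, $w$ can be measured only once, and its single $Y$-measurement performs one local complementation on its neighbourhood at that moment. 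We would argue that this cannot perform the role required by both paths: for $P_1$ it must join the two $P_1$-neighbours of $w$, and for $P_2$ the two $P_2$-neighbours. Meanwhile the $Z$-measurements prescribed by each protocol delete the other path's neighbours of $w$. So at least one path gets broken, or it gets fused with the other into a component with more than two surviving endpoints.

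\textbf{Separability.} Suppose instead that $V(P_1)\cap V(P_2)=\emptyset$ but some $x\in V(P_1)$ is adjacent to some $y\in V(P_2)$. Then $y\in N(V(P_1))\setminus V(P_1)$, so the protocol for $T_1$ prescribes a $Z$-measurement on $y$. But $y$ lies on $P_2$: either it is an endpoint of $T_2$, which must survive unmeasured, or it is interior and protocol $T_2$ prescribes a $Y$-measurement. In both cases the two protocols demand incompatible operations on the same qubit. Equivalently, if $y$ is not $Z$-measured, the edge $(x,y)$, or its image after the $Y$-measurements along the paths, persists. The final graph then has $u_1,v_1$ connected to vertices outside $\{u_1,v_1\}$, so $\{u_1,v_1\}$ is not an isolated component. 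We would make this precise by tracking the invariant that the local complementations along $P_1$ only ever create edges inside the closed neighbourhood of the path. This gives $\mathrm{dist}(V(P_1),V(P_2))\ge 2$.

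The main obstacle is the interior–interior case of disjointness. There one has to rule out that some clever ordering of the shared $Y$-measurement and the surrounding $Z$-measurements realises both pairs at once. I would first try the simple counting and consumption argument: within the repeater-path protocol each qubit receives exactly one prescribed measurement type, and the $Z$-deletions required by one path necessarily hit the other path's vertices adjacent to $w$. If that is not clean enough, I would fall back on a direct computation of the local complementation at $w$, which shows that the output contains a component with at least three unmeasured vertices.
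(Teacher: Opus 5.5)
Your overall strategy matches the paper's sketch. Disjointness holds because a qubit can be committed to only one task. Separability holds because the protocol for $T_1$ prescribes a $Z$-measurement on every vertex of $N(V(P_1))\setminus V(P_1)$, and this collides with what $T_2$ needs on that vertex. Your endpoint cases and your separability paragraph are fine as written. The extra invariant about local complementations is not needed: under the repeater-path protocol the conflict already lies in the prescriptions themselves. Since measurements on distinct qubits commute, no clever ordering can rescue them either.

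\textbf{The gap is in the interior--interior case.} You claim that the $Z$-measurements of each protocol delete the other path's neighbours of $w$, so the single $Y_w$ cannot serve both paths. This is false when the paths pass through $w$ along a common segment. Take the graph formed by $P_1=(a,p,w,p',b)$ and $P_2=(c,p,w,p',d)$. There the $P_1$- and $P_2$-neighbours of $w$ coincide, one $Y_w$ does exactly what both protocols ask, and no path vertex adjacent to $w$ is ever $Z$-measured. Your fallback computation of the local complementation at $w$ detects nothing either. The paper's intersecting example shows exactly this situation: node $5$ receives a $Y$ from both tasks, and the conflicts sit at nodes $4$ and $6$, where the paths diverge.

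\textbf{The repair is global rather than local at $w$.} If $u_1\in V(P_2)$, you are in one of your endpoint cases. Otherwise, walk along $P_1$ from $u_1$ to its first vertex $y\in V(P_2)$. Its predecessor $x$ lies in $V(P_1)$ and satisfies $x\in N(V(P_2))\setminus V(P_2)$. So $T_2$ prescribes $Z_x$, while $T_1$ needs $x$ unmeasured or $Y$-measured; this is verbatim your separability argument. Alternatively, you can do what the paper's sketch does and invoke the consumption assumption that a measured qubit serves only one task. That closes the shared-vertex case by modelling fiat rather than by graph rewriting.
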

\begin{IEEEproof}[Proof sketch]
Disjointness is necessary because two tasks cannot consume or reserve the same qubit(s) at the same time, thereby implying that they cannot both use it in the repeater-path protocol. Separability is necessary because adjacent paths remain entangled and separating them would otherwise require measuring qubits that are needed for one of the tasks.
\end{IEEEproof}

While these constraints are implicit in the MBQN literature, our contribution is to elevate them into an explicit compatibility criterion for support of concurrent tasks in quantum networks.

A note on the definition of compatibility is in order. This definition represents a worst-case notion of compatibility under no task-arrival coordination; tasks must not only avoid overlapping resources, but must also be sufficiently separated in the network to prevent adjacency-induced conflicts between end-nodes. More permissive notions of compatibility may be appropriate when additional coordination or task-aware scheduling is allowed. This is discussed in Sec.~\ref{sec: Invited_Beyond_Worst-case_Compatibility}. However, this worst-case notion provides a simple baseline for evaluating whether an entanglement distribution strategy supports concurrency beyond the single-task regime.

\subsection{Compatibility Scenarios}
We will now illustrate the worst-case compatibility through four scenarios shown in Fig.~\ref{fig: Invited_cases}. These examples highlight how concurrent tasks are naturally limited by the topology of the pre-shared entanglement resources. In all cases, we consider the seven node 1D-cluster in Fig.~\ref{fig: Invited_Network_Setup} as the resource state.

\begin{figure}[t] 
    \centering
    \subfloat[\label{fig: invited__cases_cover}]{%
        \tikzset{network/.style={rectangle, draw, inner sep=6pt},
qubit/.style={circle, gray, draw, fill, very thin, inner sep=3pt},
user/.style={rectangle, draw, gray, inner sep=8pt}
}

\begin{tikzpicture}[scale=.75, transform shape]

    \node[qubit] at (0,0) (1) {};
    \node[above left=0pt of 1, yshift=-.5pt] {\textcolor{gray}{$1$}};
    \node[user, fit=(1)] (U1) {};

    \node[qubit, right=1cm of 1] (2) {};
    \node[above left=0pt of 2, yshift=-.5pt] {\textcolor{gray}{$2$}};
    \node[user, fit=(2)] (U2) {};

    \node[qubit, right=1cm of 2] (3) {};
    \node[above left=0pt of 3, yshift=-.5pt] {\textcolor{gray}{$3$}};
    \node[user, fit=(3)] (U3) {};

    \node[qubit, right=1cm of 3] (4) {};
    \node[above left=0pt of 4, yshift=-.5pt] {\textcolor{gray}{$4$}};
    \node[user, fit=(4)] (U4) {};

    \node[qubit, right=1cm of 4] (5) {};
    \node[above left=0pt of 5, yshift=-.5pt] {\textcolor{gray}{$5$}};
    \node[user, fit=(5)] (U5) {};

    \node[qubit, right=1cm of 5] (6) {};
    \node[above left=0pt of 6, yshift=-.5pt] {\textcolor{gray}{$6$}};
    \node[user, fit=(6)] (U6) {};

    \node[qubit, right=1cm of 6] (7) {};
    \node[above left=0pt of 7, yshift=-.5pt] {\textcolor{gray}{$7$}};
    \node[user, fit=(7)] (U7) {};

    \draw[gray] (1) -- (7);
    \draw[thick, black] (3.north) -- (4.north);
    \draw[dashed, thick, black] (1.south) -- (6.south);
\end{tikzpicture}}
    \hspace{1cm }
    \subfloat[\label{fig: invited__cases_overlap}]{%
        \tikzset{network/.style={rectangle, draw, inner sep=6pt},
qubit/.style={circle, gray, draw, fill, very thin, inner sep=3pt},
user/.style={rectangle, draw, gray, inner sep=8pt}
}

\begin{tikzpicture}[scale=.75, transform shape]

    \node[qubit] at (0,0) (1) {};
    \node[above left=0pt of 1, yshift=-.5pt] {\textcolor{gray}{$1$}};
    \node[user, fit=(1)] (U1) {};

    \node[qubit, right=1cm of 1] (2) {};
    \node[above left=0pt of 2, yshift=-.5pt] {\textcolor{gray}{$2$}};
    \node[user, fit=(2)] (U2) {};

    \node[qubit, right=1cm of 2] (3) {};
    \node[above left=0pt of 3, yshift=-.5pt] {\textcolor{gray}{$3$}};
    \node[user, fit=(3)] (U3) {};

    \node[qubit, right=1cm of 3] (4) {};
    \node[above left=0pt of 4, yshift=-.5pt] {\textcolor{gray}{$4$}};
    \node[user, fit=(4)] (U4) {};

    \node[qubit, right=1cm of 4] (5) {};
    \node[above left=0pt of 5, yshift=-.5pt] {\textcolor{gray}{$5$}};
    \node[user, fit=(5)] (U5) {};

    \node[qubit, right=1cm of 5] (6) {};
    \node[above left=0pt of 6, yshift=-.5pt] {\textcolor{gray}{$6$}};
    \node[user, fit=(6)] (U6) {};

    \node[qubit, right=1cm of 6] (7) {};
    \node[above left=0pt of 7, yshift=-.5pt] {\textcolor{gray}{$7$}};
    \node[user, fit=(7)] (U7) {};

    \draw[gray] (1) -- (7);
    \draw[thick, black] (2.north) -- (6.north);
    \draw[dashed, thick, black] (4.south) -- (7.south);
\end{tikzpicture}}
    \\    
    \subfloat[\label{fig: invited_cases_disjoint}]{%
        \tikzset{network/.style={rectangle, draw, inner sep=6pt},
qubit/.style={circle, gray, draw, fill, very thin, inner sep=3pt},
user/.style={rectangle, draw, gray, inner sep=8pt}
}

\begin{tikzpicture}[scale=.75, transform shape]

    \node[qubit] at (0,0) (1) {};
    \node[above left=0pt of 1, yshift=-.5pt] {\textcolor{gray}{$1$}};
    \node[user, fit=(1)] (U1) {};

    \node[qubit, right=1cm of 1] (2) {};
    \node[above left=0pt of 2, yshift=-.5pt] {\textcolor{gray}{$2$}};
    \node[user, fit=(2)] (U2) {};

    \node[qubit, right=1cm of 2] (3) {};
    \node[above left=0pt of 3, yshift=-.5pt] {\textcolor{gray}{$3$}};
    \node[user, fit=(3)] (U3) {};

    \node[qubit, right=1cm of 3] (4) {};
    \node[above left=0pt of 4, yshift=-.5pt] {\textcolor{gray}{$4$}};
    \node[user, fit=(4)] (U4) {};

    \node[qubit, right=1cm of 4] (5) {};
    \node[above left=0pt of 5, yshift=-.5pt] {\textcolor{gray}{$5$}};
    \node[user, fit=(5)] (U5) {};

    \node[qubit, right=1cm of 5] (6) {};
    \node[above left=0pt of 6, yshift=-.5pt] {\textcolor{gray}{$6$}};
    \node[user, fit=(6)] (U6) {};

    \node[qubit, right=1cm of 6] (7) {};
    \node[above left=0pt of 7, yshift=-.5pt] {\textcolor{gray}{$7$}};
    \node[user, fit=(7)] (U7) {};

    \draw[gray] (1) -- (7);
    \draw[thick, black] (1.north) -- (3.north);
    \draw[dashed, thick, black] (4.south) -- (6.south);
\end{tikzpicture}}
    \hspace{1cm}    
    \subfloat[\label{fig: invited_cases_compatible}]{%
        \tikzset{network/.style={rectangle, draw, inner sep=6pt},
qubit/.style={circle, gray, draw, fill, very thin, inner sep=3pt},
user/.style={rectangle, draw, gray, inner sep=8pt}
}

\begin{tikzpicture}[scale=.75, transform shape]

    \node[qubit] at (0,0) (1) {};
    \node[above left=0pt of 1, yshift=-.5pt] {\textcolor{gray}{$1$}};
    \node[user, fit=(1)] (U1) {};

    \node[qubit, right=1cm of 1] (2) {};
    \node[above left=0pt of 2, yshift=-.5pt] {\textcolor{gray}{$2$}};
    \node[user, fit=(2)] (U2) {};

    \node[qubit, right=1cm of 2] (3) {};
    \node[above left=0pt of 3, yshift=-.5pt] {\textcolor{gray}{$3$}};
    \node[user, fit=(3)] (U3) {};

    \node[qubit, right=1cm of 3] (4) {};
    \node[above left=0pt of 4, yshift=-.5pt] {\textcolor{gray}{$4$}};
    \node[user, fit=(4)] (U4) {};

    \node[qubit, right=1cm of 4] (5) {};
    \node[above left=0pt of 5, yshift=-.5pt] {\textcolor{gray}{$5$}};
    \node[user, fit=(5)] (U5) {};

    \node[qubit, right=1cm of 5] (6) {};
    \node[above left=0pt of 6, yshift=-.5pt] {\textcolor{gray}{$6$}};
    \node[user, fit=(6)] (U6) {};

    \node[qubit, right=1cm of 6] (7) {};
    \node[above left=0pt of 7, yshift=-.5pt] {\textcolor{gray}{$7$}};
    \node[user, fit=(7)] (U7) {};

    \draw[gray] (1) -- (7);
    \draw[thick, black] (1.north) -- (3.north);
    \draw[dashed, thick, black] (5.south) -- (6.south);
\end{tikzpicture}}
    \caption{Compatibility examples on the resource state of Fig.~\ref{fig: Invited_Network_Setup}. Highlighted paths represent concurrent tasks requests, solid ($T_1$) and dashed ($T_2$). a) Covering tasks. b) Intersecting tasks. c) Disjoint but adjacent tasks. d) Separated tasks.}
    \label{fig: Invited_cases}
\end{figure}
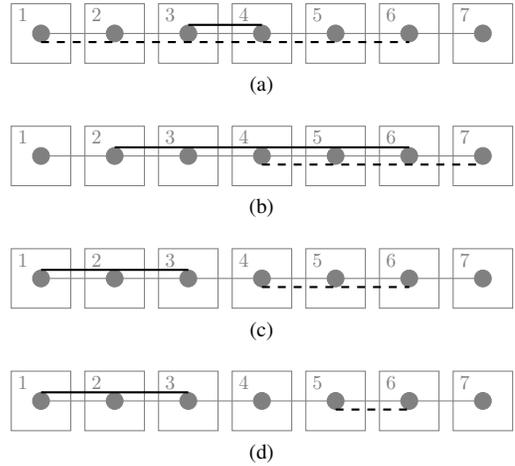

\subsubsection{Covering Tasks}
Consider two tasks $T_1=(3,4)$ and $T_2=(1,6)$. Then $P_1=(3,4)$ and $P_2=(1,2,3,4,5,6)$ implying that $P_2$ covers $P_1$, i.e., $P_1\subsetneq P_2$. Consequently, it immediately follows that the tasks are incompatible under Def.~\ref{def: invited_worst-case_compatibility}; to satisfy $T_2$, all intermediate nodes in $P_2$ are measured and therefore unentangled from the remaining entanglement resource. Particularly, nodes~3 and 4 are measured and separated implying that no local operations can entangle them to satisfy $T_1$. Note furthermore that nodes~2 and 5 would need to use two different measurement bases to satisfy the tasks: to satisfy $T_1$, both nodes measure in the $Z$-basis to separate the subgraph $(3,4)$ from the rest of the graph; to satisfy $T_2$, both nodes measure in the $Y$-basis to obtain the edge $(1,6)$. The arrival times of feedforward information at these nodes are therefore crucial for the tasks, which we further discuss in Sec.~\ref{sec: invited_partial_compatibility_timing}.

\subsubsection{Intersecting Tasks}
Consider two tasks $T_1=(2,6)$ and $T_2=(4,7)$. In this particular case, the tasks are intersecting with $V(P_1)\cap V(P_2)=\{4,5,6\}$. Once again, the tasks are incompatible due to being non-disjoint such that intermediate nodes on both paths must measure. Note that node 5 must measure in the $Y$-bases for both tasks. However, node 4 must measure to satisfy $T_1$ making $T_2$ unsatisfiable, and node 6 must measure to satisfy $T_2$ making $T_1$ unsatisfiable. As a result, these tasks are fighting for the same resources in a way that cannot be resolved with the available resources without either of the tasks being sacrificed to satisfy the other.

\subsubsection{Disjoint but Adjacent Tasks}
Now, consider two disjoint tasks $T_1=(1,3)$ and $T_2=(4,6)$ that has two adjacent end-nodes. In contrast to classical routing, disjointness of resources alone does not guarantee compatible tasks. As the two tasks are entangled through the edge $(3,4)$, separating the tasks requires either node 3 or 4 to measure in the $Z$-basis. Consequently, the tasks cannot be satisfied simultaneously.

\subsubsection{Separated Tasks}
Consider the case where the two disjoint tasks $T_1=(1,3)$ and $T_2=(5,6)$ are separated by an intermediate node, 4. With this separation, the two tasks can be separated by node 4 performing a $Z$-measurement, which was not possible before when the tasks were adjacent. In this case, both conditions in Def.~\ref{def: invited_worst-case_compatibility} are satisfied and the two tasks are therefore compatible.

\subsection{Beyond Worst-Case Compatibility}\label{sec: Invited_Beyond_Worst-case_Compatibility}
The worst-case notion of task compatibility introduced above assumes minimal coordination and relies solely on a fixed pre-distributed entanglement resource. It therefore provides a conservative baseline; two tasks are compatible only if they can be satisfied concurrently without any execution-time signaling or assistance. Realistic quantum networks, however, are not restricted to only a fixed resource topology and LOCC, but may design richer resource states, consider timing effects, or even distribute additional entanglement to supplement the pre-shared entanglement with the intention of increasing the likelihood of concurrent tasks being compatible. In this section, we briefly discuss three considerations that move beyond worst-case compatibility and illustrate how these considerations relate to the four compatibility scenarios from Fig.~\ref{fig: Invited_cases}.

\subsubsection{Resource State Redesign}
The first approach for making $G$-incompatible tasks compatible is to redesign the topology of $G$. Consider the two tasks in Fig.~\ref{fig: invited__cases_cover}, $T_1=(3,4),T_2=(1,6)$, which are incompatible due to $T_2$ covering $T_1$. A simple solution to this is to add the edge $(1,7)$ to the resource state, thereby forming a ring topology. As a result, $T_2$ can be satisfied by the path $P_2=(1,7,6)$, implying that the two tasks become compatible under Def.~\ref{def: invited_worst-case_compatibility}, as they are now both disjoint and separated. Remarkably, the two intersecting tasks in Fig.~\ref{fig: invited__cases_overlap} remain incompatible, as the overlap creates a ``knot'' that cannot be resolved by such simple rerouting, even though the overlap of these tasks is, in some sense, smaller. Consequently, compatibility is not solely determined by the size of the overlap, but also by its structural configuration.

Adding edges is ``free'' from a storage perspective, since no additional qubits need to be stored. However, the nonlocal operations required to realize such edges during the distribution phase may be too costly or even infeasible. Furthermore, adding edges to support one pair of concurrent tasks can reduce compatibility for other task pairs. In this example, the $G$-compatible tasks $(1,2)$ and $(6,7)$ become adjacent after adding the edge and thereby rendered incompatible.

In general, attempting to design a single resource state that supports all possible combinations of concurrent tasks becomes unscalable as the space of potential demand grows. Consequently, resource state design must balance several system-level considerations, including: (i) the cost of distributing and storing entanglement in terms of latency, communication, and quantum memory overhead; (ii) the expected task arrival statistics, which determine which compatibilities are most valuable to provision for; and (iii) fairness objectives, ensuring that network resources are not optimized solely for a small subset of nodes or tasks. These considerations highlight that tasks incompatibility is not merely binary, but reflects the structural trade-offs inherent to how entanglement is distributed across the network.

Finally, we note that prior works on resource state design for multi-pair entanglement generation adopt a fundamentally different notion of concurrency \cite{miguel2023optimized, sondergaard2025satellite}. In these works, a single task may contain multiple EPR pairs simultaneously, effectively embedding concurrency within the task definition itself. In contrast, we restrict each task to represent only a single EPR pair demand, but allow tasks to arrive sequentially and independently in time. As a result, concurrency in our framework emerges dynamically from the temporal overlap of multiple independent demands, rather than being fixed \emph{a priori}. This distinction leads to qualitatively different resource state design trade-offs, as the network must support unpredictable and temporally evolving concurrency patterns rather than a predetermined set of simultaneous entanglement requests.

\subsubsection{Satisfying Incompatible Tasks Partially}\label{sec: invited_partial_compatibility_timing}
The worst-case compatibility assumes `concurrent' to mean `perfectly simultaneous with no timing structure in their arrivals'. In practice, tasks arrive stochastically, and nodes therefore generally only perform local operations based on partial task information. Particularly, nodes will operate based on the first feedforward information that they receive. This has an important implication: incompatibility of tasks does not necessarily imply that neither task can be satisfied. Instead, when two incompatible tasks arrive within a short time window, the network may complete one task at the cost of leaving insufficient entanglement for executing the other task. Therefore, incompatible tasks resemble a contention scenario in which one request may be served at the expense of another. This is exemplified by all of the incompatible scenarios above. Note that this may induce a bias towards some tasks. For example, in the covering tasks, if either node 2 or 5 measures in the $Z$-basis to execute the inner task, the outer task cannot be satisfied no matter the order of task arrivals. In other hand, if both nodes measures in the $Y$-basis, then both task can be satisfied at the cost of the other nodes sacrificing their entanglement. This highlights the temporal considerations that must be taken into account when handling a multi-task quantum network, and motivates a timing measure of compatibility.

It should be noted that the ability to satisfy one of the concurrent, incompatible tasks is, however, not a guarantee. Consider the intersecting tasks $T_1=(2,6),T_2=(4,7)$ in Fig.~\ref{fig: invited__cases_overlap}, requested by nodes 2 and 7, respectively. That is, task feedforward information flows from these nodes along the paths to the respective end nodes. One could imagine that the tasks arrive in a way such that node 4 receives feedforward information from node 2 before node 7, while node 6 receives from node 7 before node 2. Completely unknowingly of them being end nodes in a requested task, nodes~4 and 6 both measure in the $Y$-basis to satisfy the other task. As a result, both nodes have already sacrificed their entanglement to satisfy one task at the time they learn about them being end nodes in another task. In this case, neither task can be satisfied, which further motivates the notion of tasks being partially compatible, not only relative to the entanglement resource, but also time. Compatibility should be viewed not only as a static property of a resource state, but a dynamic measure intertwined with traffic and timing.

As a preliminary extension to a time-dependent notion of compatibility, one may define $(G,\Delta t)$-\emph{partial compatibility}, which captures whether at least one of the two task can still be successfully satisfied when their arrival times differ by at most $\Delta t$. As the measurement-based approach consumes the shared resources in an order-dependent manner, such a notion is inherently asymmetric; partial compatibility may depend on which task arrives first, as well as on the direction in which task information and local operations propagate through the network. We note that this notion of partial compatibility is fundamentally weaker than the worst-case notion in Def.~\ref{def: invited_worst-case_compatibility}. In the limit $\Delta t\to0$, it reduces to asking whether at least one of the two tasks can be satisfied rather than whether both can be satisfied concurrently.

\subsubsection{Supplemental Entanglement Compatibility}
A third extension is to relax the assumption of zero execution-time coordination. In realistic quantum networks, entanglement distribution will not be restricted to preemptive distribution of the resource state, but may also be permitted upon task arrival, possibly at the cost of additional latency. Allowing such on-demand entanglement distribution to supplement the pre-shared entanglement resource can reduce the destructive interference between concurrent tasks. This motives a quantitative extension of compatibility: rather than asking whether two tasks are compatible strictly from the pre-shared entanglement, one may ask how much additional entanglement must be distributed on-demand to make the tasks compatible. In this view, compatibility becomes a measure of coordination cost, where tasks that only require minor supplemental entanglement are nearly compatible, whereas tasks requiring substantial additional entanglement may be considered incompatible under the network and must be controlled differently.

To ensure that on-demand entanglement is somewhat cheap, we consider how many EPR pairs nodes must distribute to neighboring nodes in order to make both tasks compatible. Consider first the covering tasks in Fig.~\ref{fig: invited__cases_cover}. Clearly, a single EPR pair distributed between nodes 3 and 4 will make these tasks compatible; the inner task is completely satisfied by this distributed entanglement, implying that the entire pre-shared entanglement can be allocated for the covering task. The disjoint but adjacent tasks in Fig.~\ref{fig: invited_cases_disjoint} also suffice with one EPR pair shared among nodes 3 and 4 to become compatible. By using an approach similar to the graph state fission formalism \cite{miguel2025graph}, one can remove the edge $(3,4)$ in the entanglement resource at the cost of an EPR pair shared between these two nodes.\footnote{By denoting the qubits in the distributed EPR pair as $3',4'$, the required operations are $Y_{4'}\text{CZ}_{4,4'}Y_{3'}\text{CZ}_{3,3'}$, where $\text{CZ}_{a,b}$ is the controlled-phase operator on qubits $a,b$ that adds an edge $(a,b)$ if it does not exists, and removes it if it does.} Consequently, the paths induced by the two tasks are separated, thereby making them compatible. Due to the intersecting nature of the two tasks in Fig.~\ref{fig: invited__cases_overlap}, these tasks cannot become compatible by only one additional EPR pair. Instead, the network can distribute three EPR pairs between the pairs (4,5), (5,6), and (6,7), whereafter the path needed to satisfy $T_2$ can be obtained through a sequence of entanglement swappings (or merging measurements followed by $Y$-measurements on the intermediate nodes if one wants to rely solely on the graph state formalism). The idea is, as with the covering tasks, to distribute entanglement such that one task is satisfied by the pre-shared entanglement resources, one by the on-demand distributed entanglement.

Motivated by these examples, a natural extension of compatibility is to incorporate on-demand entanglement distribution. Thus, one may define $(G,k)$-\emph{compatibility}, where $k$ measures the number of additional one-hop EPR pairs that must be distributed on-demand in order for two tasks to become compatible. This definition is a proper generalization of worst-case compatibility; two tasks that are compatible in the worst-case notion are trivially $(G,0)$-compatible. More importantly, $(G,k)$-compatibility captures that incompatibly is not binary in practice, where different incompatible task pairs may require different amounts of reactive entanglement to resolve their concurrency conflicts.

These considerations highlight a more general design question ion MBQN; how much entanglement should be pre-distributed proactively, and how much capacity should be reserved for on-demand entanglement distribution and coordination once the task is known.

\subsection{Compatibility as a Design Metric}
The notions introduced above suggest that compatibility should not be considered as one binary property, but as a design-level objective that complements common single-task performance metrics. Instead of asking whether a network can support a given task in isolation, compatibility asks how well the pre-shared entanglement resource supports sets of concurrent task that generally fight for the same resources.

Importantly, no single notion of compatibility is sufficient in realistic networks: The worst-case compatibility is a conservative baseline in the absence of coordination; time-dependent notions such as $(G,\Delta t)$-partial compatibility captures the important temporal impact of stochastic arrivals; coordination-enabling notions such as $(G,k)$-compatibility quantify how much on-demand entanglement distribution is required to simultaneously accommodate tasks that may otherwise be incompatible. However, in realistic networks, these dimensions are intertwined; tasks arrive stochastically, nodes act based on partial information, some nodes may have capability to distribute limited entanglement on-demand. More precisely, the capability to distribute on-demand entanglement is time-dependent, and whether time-dependent partial compatible tasks are also compatible depends on those capabilities. Consequently, practical compatibility metrics should combine both the temporal structure and entanglement distribution cost to provide a unified framework of compatibility that balances proactive entanglement distribution against reactive coordination.

Overall, compatibility serves as a design-level metric for quantum resource states: it identifies those concurrent demands that should by supported directly by the pre-shared entanglement, and those that must instead be resolved through execution-time coordination.

\section{Numerical Results}
In this section, we investigate how many tasks can be supported concurrently under stochastic demand in a network of $N$ nodes that pre-share a 1D-cluster resource state. Tasks arrive sequentially and independently, modeling a random-access-like demand in a multi-user quantum network. Each task corresponds to an ordered node pair $T_i=(u_i,v_i)$ with $u_i\neq v_i$, drawn uniformly at random from all such pairs. Starting from the empty set, tasks are sequentially appended to the set of simultaneously active tasks $\mathcal{T}$. After each task arrival, we test whether the augmented task set remains compatible under a given compatibility measure. Task arrivals terminate once incompatibility occurs, and the number of compatible tasks accumulated up to this point is recorded.

To isolate architectural concurrency limits and refrain from introducing a temporal task arrival model, we compare three compatibility measures:
\begin{enumerate}
    \item \emph{Baseline}: The simplistic case in which each resource state is responsible for satisfying only a single task.
    \item \emph{Worst-case compatibility}: Tasks must be compatible without any execution-time coordination, i.e., be disjoint and separable.
    \item \emph{$(G,1)$-compatibility}: In addition to the pre-shared entanglement, a single-hop EPR pair may be distributed on-demand to enable resolution of one structural conflict.
\end{enumerate}

For each network size $N$, we perform $10^4$ independent trials. Figure~\ref{fig: compatibility_simulation} shows the average number of simultaneously supported (compatible) tasks together with the corresponding standard error of the mean.

\begin{figure}
    \centering
    \includegraphics[width=\linewidth]{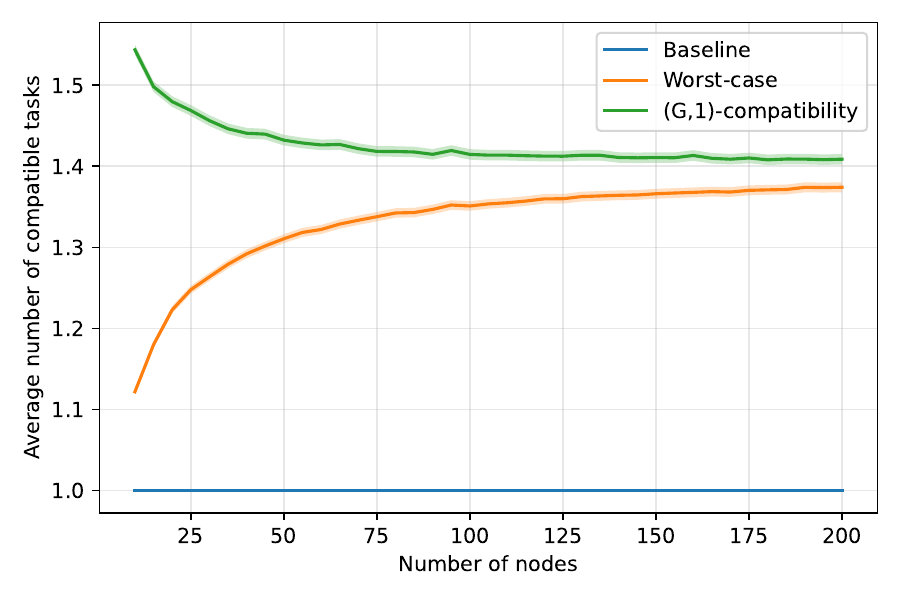}
    \caption{Average number of simultaneously supported tasks as a function of network size for three different compatibility measures: baseline, worst-case compatibility, and $(G,1)$-compatibility. Shaded regions indicate the standard error of the mean over $10^4$ trials.}
    \label{fig: compatibility_simulation}
\end{figure}

Under the baseline measure, each resource state only supports a single task, independent of network size and concurrent demand. In contrast, worst-case compatibility enables multiple tasks to be satisfied by the same resource, with the number of compatible tasks increasing logarithmically with $N$. This demonstrates that even strict architectural constraints permits more efficient entanglement utilization when concurrent demands are taken into account. 

Allowing a single EPR pair on-demand via $(G,1)$-compatibility yields a further gain for all network sizes. Although only one additional EPR pair is permitted, its adaptive placement enables resolution of conflicts arising from tasks that would otherwise require overlapping nodes or adjacent paths, thereby increasing the size of the compatible active task set. As the network size increases, the gain of $(G,1)$-compatibility decreases logarithmically, reflecting the reduced likelihood of such conflicts on larger topologies. Nonetheless, for small networks, even minimal execution-time entanglement distribution can substantially enhance entanglement utilization.

These results highlight three key insights.

First, single-task operation severely underestimates the concurrency potential of quantum networks. Even under strict worst-case assumptions without execution-time coordination, multiple simultaneous entanglement demands can be satisfied.

Second, compatibility is fundamentally an architectural property governed by the topology of the entanglement resource and the permitted local operations, rather than a protocol-specific artifact.

Third, small amounts of adaptive entanglement distribution can mitigate topological conflicts in concurrent tasks execution, suggesting a powerful trade-off between storage overhead, distribution latency, and achievable throughput.

\section{Discussion and Future Work}
The framework introduced in this paper illustrates a structural gap between how entanglement resources are commonly designed and how they are requested and consumed in practice. While the existing literature mainly focuses on optimizing the entanglement resource for individual tasks, realistic quantum networks will operate under stochastic demand, where tasks may arrive almost at the same time and compete for entanglement from the same shared multipartite state. In such networks, single-task optimality does not necessarily translate into scalable multi-task performance. The central contribution of this work is proposing \emph{compatibility} as a missing design dimension. Incorporating the potential demand traffic directly into the design phase is essential to accommodating concurrent demand inherent to multi-user networks.

Several directions remain open. First, we have restricted our attention to tasks being bipartite EPR pairs supported by the repeater-path protocol. In practice, more general application demands involve multipartite states and concurrent requests by the same node. Second, we have not directly considered common single-task metrics such as storage requirements, latency, or end-to-end fidelity. It remains unexplored how these are related when introducing compatibility as another design dimension. Third, incorporating realistic traffic models and arrival processes would allow compatibility to be studied statistically, connecting entanglement provisioning to classical notions of contention and scheduling. Fourth, richer compatibility metrics that jointly account for temporal aspects and reactive entanglement supplementation, e.g., combining $(G,\Delta t)$- and $(G,k)$-type notations, may provide a more complete characterization of performance in realistic networks. Lastly, an important practical question is how to design and distribute pre-shared entanglement resources that are compatibility-aware, given architectural constraints such as limited memory, noisy links and operations, and application specific requirements to the task execution.

\section{Conclusion}
In this paper, we introduced compatibility as a design-level metric for quantum networks relying on pre-shared entanglement resources. We argued that single-task optimization metrics are fundamentally challenged by concurrent demand in realistic quantum networks, since simultaneous demands may contend for the same entanglement resources. To address this, we introduced a minimal worst-case notion of compatibility, illustrated its implications through several concurrent-task scenarios, and discussed compatibility extensions that incorporate resource redesign, timing aspects, and the ability to supplement with additional entanglement on-demand. Simulations on a simplistic network setting furthermore demonstrated how compatibility is a fundamental architectural property, showing gains in the order of 10\%-55\% in simultaneously supporting tasks depending on the given compatibility measure relative to the single-task baseline.

With the introduction of compatibility, concurrency becomes a first-class objective in resource state design. It helps determine which task combinations should be supported by the pre-shared entanglement, and which must instead rely on additional execution-time coordination.

Compatibility-aware resource design may ultimately play a role analogous to multiple-access design in classical networks, forming a foundation for scalable quantum network architectures.

\bibliographystyle{IEEEtran}
\bibliography{bibliography_invited}

\end{document}